\documentclass[11pt]{article}

\usepackage[a4paper,margin=30mm]{geometry}
\usepackage{amsmath,amssymb,amsthm,mathtools,amscd}
\usepackage{bm}
\usepackage{booktabs}
\usepackage{enumitem}
\usepackage{microtype}
\usepackage{xcolor}
\usepackage{jheppub}
\usepackage[nameinlink,capitalise,noabbrev]{cleveref}
\usepackage{array}
\usepackage{longtable}
\usepackage{mathrsfs}
\usepackage{setspace}

\hypersetup{
  colorlinks=true,
  linkcolor=blue!45!black,
  citecolor=blue!45!black,
  urlcolor=blue!55!black,
  pdftitle={Classical Shadows of Higher-Form BRST Anomalies from a Weil-Covariant Phase-Space Bicomplex}
}

\setlist{itemsep=0.25em,topsep=0.4em}

\newtheorem{theorem}{Theorem}[section]
\newtheorem{proposition}[theorem]{Proposition}

\theoremstyle{definition}
\newtheorem{definition}[theorem]{Definition}

\theoremstyle{remark}

\newcommand{\dH}{\mathrm d_H}
\newcommand{\dd}{\mathrm d}
\newcommand{\del}{\boldsymbol\delta}
\newcommand{\cP}{\mathcal P}

\newcommand{\cW}{\mathcal W}

\newcommand{\g}{\mathfrak g}
\newcommand{\CE}{\mathrm{CE}}

\newcommand{\Pol}{\operatorname{Pol}}
\newcommand{\degh}{\operatorname{degh}}

\newcommand{\Lie}{\mathcal L}
\newcommand{\ii}{\iota}
\newcommand{\R}{\mathbb R}

\title{\textbf{Classical Shadows of Higher-Form BRST Anomalies}}
\author[a,b]{Ruizhi Shen}
\author[a,c]{Fu-Wen Shu}
\affiliation[a]{Center for Relativistic Astrophysics and High Energy Physics, Nanchang University, Nanchang 330031, China}
\affiliation[b]{Jiluan academy, Nangchang University, Nanchang, 330031, China}
\affiliation[c]{Department of Physics, Nanchang University, Nanchang, 330031, China}
\emailAdd{rz.shern@gmail.com,shufuwen@ncu.edu.cn}

\abstract{We formulate a precise sense in which a BRST anomaly may possess a classical phase space shadow.   It is the non-equivariance class of a Hamiltonian symmetry action on covariant phase space.  To organize this statement, we introduce the Weil covariant phase space bicomplex,
whose basic subcomplex controls equivariant Hamiltonian lifts.  
We then compare this purely classical class with local BRST descent.  For a Hamiltonian admissible mixed transgression, the ghost-number-two descendant \(a^{2}_{d-1}\) defines a deghostification map to the charge algebra, which is generated by the failure of the Cartan representative to be equivariant. 
The induced cohomology class is independent of Bardeen type changes of descent representative.  Thus the classical anomaly is a shadow of the BRST anomaly in the precise sense that both are realizations of the same Weil transgression class, while remaining objects of different theories. As
an explicit example, we analyze a five dimensional inflow transgression and show that the classical charge cocycle and the BRST descendant arise as two realizations of the same Weil transgression class.}

\begin{document}
\maketitle

\newpage 

\section{Introduction}

A quantum anomaly is conventionally characterized by the failure of a classical symmetry to survive quantization.  In local BRST language, a consistent anomaly in \(d\) dimensions is represented by a relative cohomology class
\[
  [a_d^1]\in H^{1,d}(s\mid\dH),
  \qquad sa_d^1+\dH a_{d-1}^2=0,
\]
and it is realized quantum mechanically when an effective action satisfies
\[
  s\Gamma=\hbar\int_X a_d^1+O(\hbar^2).
\]
The cohomology class and its geometric representatives may be constructed using classical background fields, Chern--Weil theory and BRST ghosts, but the physical statement that a global symmetry cannot be consistently gauged is a statement about the quantum theory\cite{Gaiotto2015}.

The purpose of this paper is to isolate a different, entirely classical object and to prove that, for a controlled class of transgressive higher-form anomalies, it is the phase space shadow of the BRST anomaly.  The classical object is the failure of a Hamiltonian symmetry action to be equivariant.  If a Lie algebra \(\g\) acts on a symplectic or presymplectic phase space \((\cP,\Omega)\) with Hamiltonians \(H_\epsilon\), then
\[
  K(\epsilon_1,\epsilon_2)
  :=\{H_{\epsilon_1},H_{\epsilon_2}\}-H_{[\epsilon_1,\epsilon_2]}
\]
is constant on phase space and is a Chevalley--Eilenberg two-cocycle. A nonzero class \([K]\in H^2_{\CE}(\g;\R)\) means that the charge algebra realizes a central extension rather than the original symmetry algebra, which has been called a classical anomaly \cite{Toppan2001}. In field theory with boundaries, the relevant algebra is the algebra of nontrivial boundary gauge transformations and the cocycle is a corner integral.

The central conceptual step is to place this construction in a Weil model. The Weil covariant phase space bicomplex
\[
  W(\g_\partial)\otimes\Omega(\cP_\Sigma)
\]
separates two logically distinct issues. The Hamiltonian equation says that the symplectic form has a Cartan extension, while equivariance says that this extension is basic. The cocycle \(K\) is precisely the defect of basicness. This is a purely classical statement.

The BRST side is related but not identical. BRST ghosts provide a Chevalley-Eilenberg realization of the infinitesimal symmetry algebra, and the second descendant \(a_{d-1}^2\) is the degree two symmetry cocycle associated with the first descendant \(a_d^1\).  Under a Hamiltonian admissibility hypothesis, we prove that the deghostified, Cauchy-transgressed second descendant is exactly the classical charge cocycle. The resulting map
\[
  \mathfrak S_{\Sigma}:
  H^{1,d}_{\mathrm{adm}}(s\mid\dH)
  \longrightarrow H^2_{\CE}(\g_\partial;\R)
\]
will be called the \emph{classical shadow map}. It does not assert that a quantum anomaly is literally classical.  It asserts that the quantum BRST class has a canonical classical descendant in phase space cohomology whenever the first descent is Hamiltonian integrable.

Local BRST cohomology and anomaly descent are classical subjects \cite{BarnichBrandtHenneaux2000}; homotopy moment maps and their obstruction theory are well developed \cite{CalliesEtAl2016,FregierEtAl2015,Dinamo2026}; BRST covariant phase space has already been related to corner charges and holographic Ward identities, and more recently to BRST Noether corner charges, charge brackets and associated BRST cocycles, including rank one theories with 2-form gauge fields \cite{BaulieuWetzstein2024,Baulieu2026}; and central extensions of electric and magnetic boundary charges in Maxwell theory are known \cite{FreidelPranzetti2018,HofmanIqbal2018}. Our contribution is the explicit Weil--CPS comparison theorem that identifies the classical non-equivariance class with the second higher form BRST descendant, together with a representative level computation for the mixed one-form anomaly developed in \cite{JiaWangZhang2026I}.

The paper is organized as follows.  \Cref{sec:classical} defines the classical phase space anomaly and interprets it in the Weil bicomplex.  \Cref{sec:shadow} introduces the local BRST descent, the deghostification map and the shadow theorem.  \Cref{sec:example} gives the complete mixed two-form calculation, including the improved descent representative. \Cref{sec:discussion} concludes this paper, explains the distinction between classical and quantum anomalies and give an insight into the shadow map anology with the non-squeezing theorem.  Technical material is collected in the appendices.

\section{Classical phase space anomalies and the Weil--CPS bicomplex}
\label{sec:classical}
This section formulates the classical charge cocycle as the CPS anomaly in general classical phase space. Then we introduce Weil-CPS bicomplex to better interpret the anomaly and discuss why this bicomplex is important in realizing our classical anomaly. For the signs we use, see \cref{app:signs} 
\subsection{Covariant phase space with a corner}

Let \(Y\) be an oriented d+1 dimensional manifold, \(\Sigma\subset Y\) an oriented hypersurface and \(S=\partial\Sigma\) its corner. Let \(L\) be a local Lagrangian \((d+1)\)-form on the infinite jet bundle of a classical field bundle. We use the first variation decomposition
\begin{equation}
  \del L=\mathcal E+\dH\theta,
  \qquad \omega:=\del\theta,
  \label{eq:firstvariation}
\end{equation}
where \(\mathcal E\) is the Euler--Lagrange form, \(\theta\) the presymplectic potential current and \(\omega\) the presymplectic current.  On shell and on linearized solutions, \(\dH\omega\simeq0\). The integrated presymplectic form is
\begin{equation}
  \Omega_\Sigma:=\int_\Sigma\omega.
  \label{eq:OmegaSigma}
\end{equation}
As usual, one either reduces by the degeneracy directions that vanish at \(S\), or retains the transformations with nontrivial boundary values as physical boundary symmetries. We denote the resulting classical phase space by \(\cP_\Sigma\) and its boundary symmetry algebra by \(\g_\partial\). We handle the boundary terms and integrability condition here as the standard ones with the Hamiltonian generators exist \cite{HarlowWu2019,CattaneoMnevReshetikhin2012}.

We adopt the Hamiltonian convention
\begin{equation}
  \ii_{X_F}\Omega_\Sigma=-\del F,
  \qquad \{F,G\}:=\Omega_\Sigma(X_F,X_G).
  \label{eq:HamConvention}
\end{equation}
If \(\epsilon\in\g_\partial\) generates a vector field \(v_\epsilon\) on \(\cP_\Sigma\), a Hamiltonian lift is a linear assignment \(\epsilon\mapsto H_\epsilon\) such that
\begin{equation}
  \ii_{v_\epsilon}\Omega_\Sigma=-\del H_\epsilon.
  \label{eq:HamLift}
\end{equation}

\subsection{The classical charge cocycle}
We then define the non-equivariance cocycle as our classical anomaly, because non-equivariance means that the poisson algebra structure fails in interpreting the Lie algebra structure, and in the ideal case, one would expect \(\epsilon\mapsto H_{\epsilon}\) to be a Lie algebra homomorphism.

\begin{definition}[Classical phase space anomaly]
Assume that \(\g_\partial\) acts symplectically on \((\cP_\Sigma,\Omega_\Sigma)\) and admits Hamiltonians \(H_\epsilon\).  The \emph{non-equivariance cocycle} is
\begin{equation}
  K_\Sigma(\epsilon_1,\epsilon_2)
  :=\{H_{\epsilon_1},H_{\epsilon_2}\}-H_{[\epsilon_1,\epsilon_2]}.
  \label{eq:Kdefinition}
\end{equation}
A nonzero class
\(
 [K_\Sigma]\in H^2_{\CE}(\g_\partial;\R)
\)
is called a \emph{classical phase space anomaly}.\\

\end{definition}

The cocycle \(K_\Sigma\) has the following elementary properties:

\begin{proposition}
\label{prop:classical-cocycle}
Under the preceding assumptions, \(K_\Sigma\) is constant on each connected component of \(\cP_\Sigma\), is a Chevalley--Eilenberg two-cocycle and changes by a CE coboundary under constant improvements \(H_\epsilon\mapsto H_\epsilon+\ell(\epsilon)\).
\end{proposition}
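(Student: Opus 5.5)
The plan is to run the standard finite-dimensional argument for non-equivariant moment maps, using only the convention \eqref{eq:HamConvention} and the lift condition \eqref{eq:HamLift}. Three facts will be used throughout.
\begin{itemize}
\item[(i)] The Poisson bracket of two Hamiltonian functions is Hamiltonian, with $X_{\{F,G\}}$ equal, up to the sign fixed by \eqref{eq:HamConvention}, to $[X_F,X_G]$.
\item[(ii)] The map $\epsilon\mapsto v_\epsilon$ is a Lie algebra (anti-)homomorphism, $[v_{\epsilon_1},v_{\epsilon_2}]=\pm v_{[\epsilon_1,\epsilon_2]}$.
\item[(iii)] The Poisson bracket satisfies the Jacobi identity on Hamiltonian functions.
\end{itemize}
Fact (iii) follows from $\del\Omega_\Sigma=0$. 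In the presymplectic case I will work on $\cP_\Sigma$ after reduction by the degeneracy directions, as the paper stipulates, so that (i)--(iii) hold there.

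\emph{Constancy.} I would compute $\del K_\Sigma(\epsilon_1,\epsilon_2)$. First,
\[
\del\{H_{\epsilon_1},H_{\epsilon_2}\}=\del\bigl(\ii_{v_{\epsilon_2}}\ii_{v_{\epsilon_1}}\Omega_\Sigma\bigr).
\]
Cartan calculus with $\Lie_{v_\epsilon}\Omega_\Sigma=0$ (which follows from \eqref{eq:HamLift} and closedness) gives $\del(\ii_{v_2}\ii_{v_1}\Omega)=\pm\ii_{[v_1,v_2]}\Omega$. Combining this with (ii) and \eqref{eq:HamLift} turns the right-hand side into $\mp\del H_{[\epsilon_1,\epsilon_2]}$. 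With the conventions of \cref{app:signs}, the signs must conspire so that $\del K_\Sigma=0$. Hence $K_\Sigma$ is locally constant, i.e.\ constant on each connected component.

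\emph{Cocycle property.} Write $K(\epsilon_i,\epsilon_j)=\{H_i,H_j\}-H_{[ij]}$. The plan is to show that $\sum_{\mathrm{cyc}}K([\epsilon_1,\epsilon_2],\epsilon_3)$ vanishes; for trivial coefficients this is exactly $\dd_{\CE}K=0$.
\begin{itemize}
\item[1.] Expand $\{H_{[12]},H_3\}=\{\{H_1,H_2\}-K_{12},H_3\}$.
\item[2.] Since $K_{12}$ is constant, $\{K_{12},H_3\}=0$, so $\{H_{[12]},H_3\}=\{\{H_1,H_2\},H_3\}$.
\item[3.] The cyclic sum of the terms $\{\{H_1,H_2\},H_3\}$ vanishes by the Jacobi identity (iii).
\item[4.] The cyclic sum of the terms $H_{[[12],3]}$ vanishes by the Jacobi identity in $\g_\partial$ together with linearity of $\epsilon\mapsto H_\epsilon$.
\end{itemize}
Therefore $\sum_{\mathrm{cyc}}K([\epsilon_1,\epsilon_2],\epsilon_3)=0$.

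\emph{Improvements.} Under $H_\epsilon\mapsto H_\epsilon+\ell(\epsilon)$ with $\ell\in\g_\partial^*$ constant, the constants drop out of the Poisson bracket. Hence $K$ shifts by $-\ell([\epsilon_1,\epsilon_2])=(\dd_{\CE}\ell)(\epsilon_1,\epsilon_2)$, up to the paper's sign convention, which is a CE coboundary. Note also that $K$ is manifestly antisymmetric and bilinear, so it is indeed a CE $2$-cochain.

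\emph{Main obstacle.} The only delicate step is the sign bookkeeping in the constancy argument. Whether $v$ is a homomorphism or an anti-homomorphism, and the minus sign in \eqref{eq:HamConvention}, must combine so that $\del\{H_1,H_2\}=\del H_{[12]}$ rather than $-\del H_{[12]}$. I would fix this by checking the identity $\ii_{[X,Y]}=[\Lie_X,\ii_Y]$ directly against the conventions of \cref{app:signs}, and if necessary choose the orientation of the action map accordingly. A secondary point is that in infinite dimensions the Jacobi identity requires the brackets to be well defined on $\cP_\Sigma$; I take this to be part of the standing integrability assumptions.
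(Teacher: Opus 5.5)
Your proposal is correct and follows the paper's own argument in more detail: constancy from $[v_{\epsilon_1},v_{\epsilon_2}]-v_{[\epsilon_1,\epsilon_2]}=0$ via the Cartan identity, the cocycle condition from the Jacobi identity of the Poisson bracket, and a shift by the coboundary of $\ell$ under constant improvements. The sign you flag resolves cleanly: with \eqref{eq:HamConvention} one has $\{F,G\}=X_F(G)$ and $X_{\{F,G\}}=[X_F,X_G]$, so constancy of $K_\Sigma$ requires exactly that $\epsilon\mapsto v_\epsilon$ be a Lie algebra homomorphism, which the paper tacitly assumes.
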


\begin{proof}
The Hamiltonian vector field of \(K_\Sigma(\epsilon_1,\epsilon_2)\) is
\[
  [v_{\epsilon_1},v_{\epsilon_2}]-v_{[\epsilon_1,\epsilon_2]}=0,
\]
so \(\del K_\Sigma=0\). The Jacobi identity for the Poisson bracket implies \(\dd_{\CE}K_\Sigma=0\).  Finally, a constant linear shift changes \(K_\Sigma\) by \(-\dd_{\CE}\ell\). Hence, the cohomology class is intrinsic.
\end{proof}

For an Abelian boundary algebra and trivial coefficients, every CE two-coboundary vanishes. Consequently, any nonzero antisymmetric bilinear form \(K_\Sigma\) is automatically cohomologically nontrivial. The statement remains dependent on the boundary conditions: if the allowed gauge parameters vanish at \(S\), the cocycle vanishes and no boundary symmetry remains.

\subsection{The Weil--CPS bicomplex}
We now introduce the Weil-CPS bicomplex and represent the classical anomaly in this bicomplex structure.\\
Let \(W(\g_\partial)\) be the Weil algebra of the boundary symmetry algebra, see \cref{app:weil} for detailed definition. For a basis \(e_a\) with structure constants \(f^a{}_{bc}\), it is generated by degree one elements \(\vartheta^a\) and degree two elements \(u^a\), with
\begin{equation}
  \dd_W\vartheta^a
  =u^a-\frac12 f^a{}_{bc}\vartheta^b\vartheta^c,
  \qquad
  \dd_Wu^a=-f^a{}_{bc}\vartheta^b u^c.
  \label{eq:WeilDifferential}
\end{equation}
We introduce the bicomplex
\begin{equation}
  \cW^{r,q}_\Sigma
  :=W^r(\g_\partial)\otimes\Omega^q(\cP_\Sigma),
  \qquad
  D_{W\!\cP}=\dd_W+(-1)^r\del.
  \label{eq:WeilCPS}
\end{equation}
The total contractions and Lie derivatives are
\[
  \mathbb I_a=\ii_a^W+\ii_{v_a},
  \qquad
  \mathbb L_a=[D_{W\!\cP},\mathbb I_a].
\]
The basic subcomplex is the common kernel of \(\mathbb I_a\) and \(\mathbb L_a\). Its cohomology is the Weil model of equivariant cohomology.

The equivalent Cartan description is more transparent for the present purpose.  If \(H:\g_\partial\to C^\infty(\cP_\Sigma)\) is a Hamiltonian assignment, define the Cartan element
\begin{equation}
  \Omega_C(\xi):=\Omega_\Sigma-H_\xi.
  \label{eq:CartanExtension}
\end{equation}
With \(\dd_C=\del-\ii_{v_\xi}\), the Hamiltonian equation \eqref{eq:HamLift} is exactly \(\dd_C\Omega_C=0\). The remaining requirement is equivariance:
\begin{equation}
  \Lie_{v_\epsilon}H_\eta-H_{[\epsilon,\eta]}=0.
  \label{eq:equivariance}
\end{equation}
Using \eqref{eq:HamConvention}, the left hand side equals \(K_\Sigma(\epsilon,\eta)\). Thus, the classical phase space anomaly admits the following Weil interpretation:

\begin{proposition}[Weil interpretation]
\label{prop:WeilInterpretation}
The classical anomaly class \([K_\Sigma]\) is the obstruction to lifting \(\Omega_\Sigma\) to a basic closed element of the Weil--CPS bicomplex. The Hamiltonian equation supplies Cartan closure, while the non-equivariance cocycle is precisely the defect of basicness.
\end{proposition}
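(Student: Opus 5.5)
I will work in the Cartan model, where the statement is easiest to see, and transport the conclusion to the Weil--CPS bicomplex by the standard Mathai--Quillen/Kalkman isomorphism between the basic subcomplex of \(W(\g_\partial)\otimes\Omega(\cP_\Sigma)\) and the Cartan complex \((S(\g_\partial^*)\otimes\Omega(\cP_\Sigma))^{\g_\partial}\) with differential \(\dd_C=\del-\ii_{v_\xi}\). Under this isomorphism the Weil generators \(u^a\) map to the polynomial generators \(\xi^a\), and the \(\vartheta^a\) are set to zero. The proof then splits into two independent claims. The first is Cartan closure: \(\dd_C\Omega_C=0\) is equivalent to the Hamiltonian equation. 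The second is basicness: \(\Omega_C\) lies in the invariant subcomplex exactly when \eqref{eq:equivariance} holds, and the defect is \(K_\Sigma\).

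\emph{Step 1 (closure).} I will compute \(\dd_C\Omega_C=\del\Omega_\Sigma-\del H_\xi-\ii_{v_\xi}\Omega_\Sigma+\ii_{v_\xi}H_\xi\). The first term vanishes because \(\omega=\del\theta\). The last term vanishes for degree reasons. What remains is \(-(\ii_{v_\xi}\Omega_\Sigma+\del H_\xi)\), which is zero by \eqref{eq:HamLift}, using linearity in \(\xi\).

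\emph{Step 2 (basicness).} Invariance under the coadjoint-plus-geometric action of \(\g_\partial\) on \(S(\g^*)\otimes\Omega\) means \(\Lie_{v_\epsilon}\Omega_C(\xi)=\Omega_C([\epsilon,\xi])\) in the polynomial-map picture. For the two-form part this is \(\Lie_{v_\epsilon}\Omega_\Sigma=\dd\ii\Omega+\ii\del\Omega=-\del\del H_\epsilon=0\) by Cartan's formula. For the function part it is exactly \eqref{eq:equivariance}. Writing \(\Lie_{v_\epsilon}H_\eta=\ii_{v_\epsilon}\del H_\eta=-\Omega_\Sigma(v_\eta,v_\epsilon)=\{H_\epsilon,H_\eta\}\) via \eqref{eq:HamConvention}, the defect equals \(K_\Sigma(\epsilon,\eta)\). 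In the Weil language I will verify the same thing directly. The element \(\Omega_\Sigma-u^aH_a\) is \(D_{W\!\cP}\)-closed up to \(\vartheta\)-dependent terms, and \(\mathbb I_a\)-horizontality fails. To correct it I will add \(\vartheta\)-terms, namely \(\vartheta^a\del H_a\)-type terms and \(\tfrac12\vartheta^a\vartheta^bK_{ab}\), via the Kalkman twist \(e^{-\vartheta^a\ii_{v_a}}\). The obstruction to killing \(\mathbb L_a\) then sits in the component \(\vartheta^a\vartheta^b\otimes\Omega^0\) and is \(K_{ab}\).

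\emph{Step 3 (obstruction is the class).} I will show that the freedom in choosing a basic closed lift of \(\Omega_\Sigma\) with fixed top component is \(H\mapsto H+\ell\) with \(\ell\in\g_\partial^*\) constant. Closure forces \(\del\ell=0\) on each connected component. By \cref{prop:classical-cocycle} this shifts \(K_\Sigma\) by \(-\dd_{\CE}\ell\). Hence a basic closed lift exists iff some \(\ell\) makes \(K_\Sigma-\dd_{\CE}\ell=0\), that is, iff \([K_\Sigma]=0\in H^2_{\CE}(\g_\partial;\R)\). Constancy of \(K_\Sigma\), also from \cref{prop:classical-cocycle}, ensures the obstruction lives in \(\R\)-valued CE cohomology rather than in \(C^\infty(\cP_\Sigma)\)-valued cohomology.

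\emph{Main obstacle.} I expect the delicate point to be Step 3 on the Weil side, namely showing that every basic closed lift must be of Cartan form \(\Omega_\Sigma-H_\xi\). This requires care with the infinite-dimensional \(\g_\partial\) and the presymplectic reduction, since the Weil--Cartan isomorphism and the degree-two analysis must be checked without assuming compactness or properness. I would first restrict to total degree two, where only \(\Omega^2\), \(u\otimes\Omega^0\) and \(\vartheta\vartheta\otimes\Omega^0\) and \(\vartheta\otimes\Omega^1\) components appear. I would then check the statement by hand there, which avoids invoking the general isomorphism.
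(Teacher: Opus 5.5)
Your Steps 1 and 2 are the paper's own argument. The Cartan element \(\Omega_\Sigma-H_\xi\) is \(\dd_C\)-closed exactly when \(\ii_{v_\xi}\Omega_\Sigma=-\del H_\xi\). It is invariant exactly when \eqref{eq:equivariance} holds, and the defect is \(K_\Sigma\). The Mathai--Quillen/Kalkman isomorphism of \cref{app:weil} then carries this to the basic Weil subcomplex. Your Step 3 is a correct sharpening that the paper leaves implicit in \cref{prop:classical-cocycle}. Closure fixes \(H\) up to a locally constant \(\ell\in\g_\partial^*\), so a basic closed lift exists iff \(K_\Sigma=\dd_{\CE}\ell\) for some \(\ell\). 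This is what makes the \emph{class}, not just the cocycle, the obstruction.

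The part that does not hold as written is the direct Weil-side check, which you plan to use in place of the isomorphism.

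\textbf{What goes wrong.} The Weil differential has no \(-u^a\ii_{v_a}\) term, so \(D_{W\!\cP}(\Omega_\Sigma-u^aH_a)\) contains the \(\vartheta\)-free term \(-u^a\del H_a\). More seriously, write a general degree-two element as \(X=\Omega_\Sigma+\vartheta^a\beta_a+\tfrac12\vartheta^a\vartheta^b g_{ab}+u^a h_a\). Horizontality \(\mathbb I_aX=0\) forces \(\beta_a=-\ii_{v_a}\Omega_\Sigma=\del H_a\) and \(g_{ab}=\Omega_\Sigma(v_a,v_b)=\{H_a,H_b\}=K_{ab}+f^c{}_{ab}H_c\). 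That field-dependent function, not \(K_{ab}\), is what the Kalkman twist inserts in your \((\vartheta^a,u^a)\) basis. The coefficient becomes \(K_{ab}\) only for abelian \(\g_\partial\), or after rewriting \(u^a=\dd_W\vartheta^a+\tfrac12 f^a{}_{bc}\vartheta^b\vartheta^c\). So your element \(\Omega_\Sigma+\vartheta^a\del H_a+\tfrac12\vartheta^a\vartheta^bK_{ab}-u^aH_a\) is not even horizontal in general.

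\textbf{Correct bookkeeping.} Impose \(D_{W\!\cP}X=0\) component by component:
\begin{itemize}
\item the \(u\otimes\Omega^1\) component gives \(h_a=-H_a+\ell_a\) with \(\del\ell_a=0\);
\item the \(\vartheta u\otimes\Omega^0\) component gives \(K_{ab}+f^c{}_{ab}\ell_c=0\), i.e.\ \(K_\Sigma=\dd_{\CE}\ell\);
\item the \(\vartheta\vartheta\otimes\Omega^1\) component vanishes identically by \(\del K_\Sigma=0\);
\item the \(\vartheta\vartheta\vartheta\otimes\Omega^0\) component vanishes identically by the Jacobi identity and \(\dd_{\CE}K_\Sigma=0\).
\end{itemize}
Since \(\mathbb L_a=[D_{W\!\cP},\mathbb I_a]\), a horizontal closed element is automatically basic. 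For \(\ell=0\) one finds \(D_{W\!\cP}X=\vartheta^b u^a K_{ab}\) and \(\mathbb L_cX=u^aK_{ac}\), up to convention-dependent signs. So the basicness defect sits in \(u\otimes\Omega^0\) (and the closure defect in \(\vartheta u\otimes\Omega^0\)), not in \(\vartheta\vartheta\otimes\Omega^0\).

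With this correction, your by-hand degree-two computation is a self-contained proof. It needs neither the general Mathai--Quillen isomorphism nor any finite-dimensionality of \(\g_\partial\), which the paper's appeal to \cref{app:weil} silently assumes.
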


This formulation makes the classical character of the construction manifest.  It uses only \(\Omega_\Sigma\), the classical symmetry action and the Weil model of equivariant cohomology.

\section{BRST descent and the classical shadow map}
\label{sec:shadow}
In this section we will give a basic review of the BRST anomaly and its descent theory. Then we will construct an antisymmetric multilinear polarization from ghost number expression, which we will call \emph{deghostification}. To bridge the BRST anomaly and our classical anomaly, we will focus on one special kind of BRST descent, called Hamiltonian admissible mixed descent, which we will define in the second part of this section. With this specific descent, we give and prove our main theorem that the classical anomaly will be the integration of our \emph{deghostification} of the Hamiltonian admissible mixed descent on a hypersurface corner. At the end of this section, we will give the descent-Hamiltonian identity which will give the relation between the first descendant and presymplectic current.
\subsection{Local BRST anomalies and their second descendants}

Let \(\Omega^{p,q,g}_{\mathrm{loc}}\) denote local forms with horizontal degree \(p\), field space degree \(q\) and ghost number \(g\). The three odd differentials obey
\begin{equation}
  \dH^2=\del^2=s^2=0,
  \quad
  \dH\del+\del\dH=0,
  \quad
  \dH s+s\dH=0,
  \quad
  \del s+s\del=0.
  \label{eq:tricomplex}
\end{equation}
A local consistent anomaly candidate is represented by
\begin{equation}
  a_d^1\in\Omega^{d,0,1}_{\mathrm{loc}},
  \qquad
  s a_d^1+\dH a_{d-1}^2=0.
  \label{eq:WZ}
\end{equation}
Equivalent representatives differ by
\begin{equation}
  a_d^1\sim a_d^1+s m_d^0+\dH n_{d-1}^1.
  \label{eq:BRSTequiv}
\end{equation}
The class belongs to \(H^{1,d}(s\mid\dH)\). It becomes a genuine quantum anomaly when it is realized with nonzero coefficient in the effective action or partition function.  The BRST complex itself remains a classical homological encoding of the gauge algebra.

For a ghost-number-\(r\) expression \(\alpha(C)\), let
\(
 \Pol_r(\alpha)(\epsilon_1,\ldots,\epsilon_r)
\)
denote its antisymmetric multilinear polarization, followed by replacement of ghosts by ordinary classical parameters. We refer to this antisymmetric polarization followed by ghost replacement as \emph{deghostification}.  \\

\begin{definition}[\emph{Deghostification} by antisymmetric polarization]
Let
\begin{equation}
\alpha(C)\in\Omega_{\mathrm{loc}}^{p,q,r}
\end{equation}
be a local expression of ghost number $r$, and suppose that its relevant
ghost-number-$r$ component is homogeneous of degree $r$ in the
degree-one ghosts, with no contribution from higher-stage ghosts. 
For ordinary classical symmetry parameters
\begin{equation}
\epsilon_1,\ldots,\epsilon_r\in\mathfrak g_{\partial},
\end{equation}
introduce $r$ auxiliary Grassmann-odd variables
\begin{equation}
\vartheta_1,\ldots,\vartheta_r,
\qquad
\vartheta_i\vartheta_j=-\vartheta_j\vartheta_i,
\qquad
\vartheta_i^2=0,
\end{equation}
and define
\begin{equation}
C_{\vartheta}
:=
\sum_{i=1}^{r}\vartheta_i\epsilon_i.
\end{equation}

The auxiliary variables $\vartheta_i$ belong to an external Grassmann
algebra. They anticommute with one another but commute with all local
horizontal and field-space forms. Their sole purpose is to antisymmetrize
the parameter labels.

The degree-$r$ deghostification, or antisymmetric polarization, of
$\alpha$ is defined by
\begin{equation}
\boxed{
\operatorname{Pol}_r(\alpha)
(\epsilon_1,\ldots,\epsilon_r)
:=
[\vartheta_1\cdots\vartheta_r]\,
\alpha(C_{\vartheta}),
}
\end{equation}
where
\(
[\vartheta_1\cdots\vartheta_r]
\)
denotes extraction of the coefficient of the ordered monomial
$\vartheta_1\cdots\vartheta_r$.

The same substitution is understood for every horizontal derivative of
the degree-one ghosts:
\begin{equation}
d_H^k C
\longmapsto
\sum_{i=1}^{r}
\vartheta_i\,d_H^k\epsilon_i,
\qquad
k\geq 1,
\end{equation}
with all products evaluated according to the total Koszul sign convention
of the local BRST complex.
\end{definition}

By construction, the polarization is alternating:
\begin{equation}
\operatorname{Pol}_r(\alpha)
(\epsilon_{\sigma(1)},\ldots,\epsilon_{\sigma(r)})
=
\operatorname{sgn}(\sigma)\,
\operatorname{Pol}_r(\alpha)
(\epsilon_1,\ldots,\epsilon_r),
\qquad
\sigma\in S_r.
\end{equation}
Hence
\begin{equation}
\operatorname{Pol}_r(\alpha)
\in
\operatorname{Hom}\!\left(
\wedge^r\mathfrak g_{\partial},
\Omega_{\mathrm{loc}}^{p,q,0}
\right).
\end{equation}

Equivalently, if the relevant ghost-number-$r$ component can be written
in terms of a local $r$-linear differential operator
$\mathcal B_r$ as
\begin{equation}
\alpha(C)
=
\mathcal B_r(C,\ldots,C),
\end{equation}
then
\begin{equation}
\boxed{
\operatorname{Pol}_r(\alpha)
(\epsilon_1,\ldots,\epsilon_r)
=
\sum_{\sigma\in S_r}
\operatorname{sgn}(\sigma)\,
\mathcal B_r
\bigl(
\epsilon_{\sigma(1)},\ldots,
\epsilon_{\sigma(r)}
\bigr),
}
\end{equation}

with the appropriate Koszul signs included whenever the arguments carry
nontrivial horizontal degree.

For reducible higher-form symmetries, the ghost number alone does not
determine the number of degree-one ghost insertions. In particular,
ghosts for ghosts and higher-stage ghosts encode reducibility rather than
ordinary $r$-cochains of the boundary symmetry algebra \cite{HenneauxKnaepen1999}. Accordingly,
$\operatorname{Pol}_r$ is applied, when necessary, after choosing within
the relevant relative BRST cohomology class a representative whose
component of interest is $r$-linear in degree-one ghosts. Such a
representative defines an ordinary Chevalley--Eilenberg $r$-cochain
through
\begin{equation*}
\operatorname{Pol}_r(\alpha)
:
\wedge^r\mathfrak g_{\partial}
\longrightarrow
\Omega_{\mathrm{loc}}^{p,q,0}.
\end{equation*}

In the remainder of this paper, we restrict attention to the ghost-number-two case; the higher-degree cases admit analogous generalizations.

\subsection{Hamiltonian admissible mixed descents}

The mixed anomalies of interest naturally involve a decomposition
\begin{equation}
  \g_\partial=\g_L\oplus\g_R,
  \qquad [\g_L,\g_R]=0.
  \label{eq:splitg}
\end{equation}
The full action of
$\g_\partial=\g_L\oplus\g_R$ on
$(\cP_\Sigma,\Omega_\Sigma)$ is symplectic and admits a general Hamiltonian
lift with
$$H=H_L\oplus H_R
$$
Since the Hamiltonian here is not a single lift of whether $\epsilon_L$ or $\epsilon_R$, it is hard to figure out the relation between the Hamiltonian and $\Omega_\Sigma$. However, a Bardeen type choice of local counterterm may place the ghost-number-one anomaly entirely in the right factor while the second descendant remains bilinear in left and right ghosts \cite{Bardeen1984}. This motivates the following notion of Hamiltonian admissibility, which singles out the descents for which the BRST data can be compared directly with the classical charge algebra.

\begin{definition}[Hamiltonian admissible mixed descent]
\label{def:admissible}
A descent pair \((a_d^1,a_{d-1}^2)\) is \emph{Hamiltonian admissible on \(\Sigma\)} if:
\begin{enumerate}[label=(\roman*)]
  \item \(a_d^1\) is linear in the right ghost \(C_R\) and contains no left ghost;
  \item for \(\eta\in\g_R\), the deghostified density
  \(
    A_R(\eta):=\Pol_1(a_d^1)(\eta)
  \)
  defines
  \[
    H_R(\eta):=\int_\Sigma A_R(\eta)
  \]
  satisfying \(\ii_{v_\eta}\Omega_\Sigma=-\del H_R(\eta)\);

  \item \(\Pol_2(a_{d-1}^2)(\epsilon,\eta)\) is a local \((d-1)\)-form for \(\epsilon\in\g_L\), \(\eta\in\g_R\).
\end{enumerate}
\end{definition}

The first condition is a representative choice, not a statement that the anomaly belongs to only one symmetry factor. In a mixed anomaly, local counterterms redistribute the first descendant between the two factors, while the joint class remains invariant.

\subsection{The shadow theorem}
For a Hamiltonian admissible mixed descent, the classical non-equivariance cocycle is obtained by \emph{deghostifying} the second BRST descendant and integrating it over the corner of a hypersurface.
\begin{theorem}[Classical shadow theorem]
\label{thm:shadow}
Let \((a_d^1,a_{d-1}^2)\) be a Hamiltonian admissible mixed descent on a hypersurface \(\Sigma\) with closed corner \(S=\partial\Sigma\). For \(\epsilon\in\g_L\) and \(\eta\in\g_R\), define
\begin{equation}
  K_{LR}(\epsilon,\eta)
  :=\Omega_\Sigma(v_\epsilon,v_\eta).
  \label{eq:KLR}
\end{equation}
Then
\begin{equation}
  \boxed{
  K_{LR}(\epsilon,\eta)
  =-\int_S \Pol_2(a_{d-1}^2)(\epsilon,\eta).}
  \label{eq:ShadowFormula}
\end{equation}
After antisymmetric extension to \(\g_L\oplus\g_R\), this is the classical charge cocycle. Its cohomology class depends only on the relative BRST class of the descent pair, up to a CE coboundary.
\end{theorem}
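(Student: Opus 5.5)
The plan is to compute \(K_{LR}(\epsilon,\eta)=\Omega_\Sigma(v_\epsilon,v_\eta)\) through the right-factor Hamiltonian provided by admissibility, and then trade the bulk integral for a corner integral by means of the deghostified descent equation. By \eqref{eq:HamConvention} and condition (ii) of \cref{def:admissible},
\[
  \Omega_\Sigma(v_\epsilon,v_\eta)=-\ii_{v_\epsilon}\ii_{v_\eta}\Omega_\Sigma\ \text{(up to the fixed ordering sign)}
  =\ii_{v_\epsilon}\del H_R(\eta)=\Lie_{v_\epsilon}H_R(\eta)
  =\int_\Sigma \Lie_{v_\epsilon}A_R(\eta).
\]
I would fix the orientation and sign conventions from \cref{app:signs} at this stage, so that \(K_{LR}(\epsilon,\eta)=\int_\Sigma \delta_\epsilon A_R(\eta)\). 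Here \(\delta_\epsilon\) denotes the field-space action of \(\epsilon\in\g_L\) on local forms, which coincides with \(\Lie_{v_\epsilon}\) on functionals.

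The second step is to deghostify the Wess--Zumino equation \(sa^1_d+\dH a^2_{d-1}=0\) in the bidegree \((\epsilon,\eta)\). Split \(s=s_L+s_R\) with \(C=C_L+C_R\). By condition (i), \(a^1_d\) is linear in \(C_R\) and free of \(C_L\). Because \([\g_L,\g_R]=0\), the term \(s_R C_R\) produces only \(C_RC_R\) monomials and contributes nothing to the mixed component. Hence the \(C_LC_R\) part of \(sa^1_d\) comes entirely from \(s_L\) acting on the fields in \(a^1_d\). On the mixed component, \(\Pol_2\) therefore gives
\[
  \Pol_2(s a^1_d)(\epsilon,\eta)=\pm\,\delta_\epsilon A_R(\eta),
\]
where the sign comes from moving \(\vartheta_1\) past \(\vartheta_2\) and past the odd operator \(s\). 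I would verify this sign by writing \(s_L\phi=\delta_{C_L}\phi\) and applying the definition of \(\Pol_2\) directly. Since \(\Pol_2\) commutes with \(\dH\) (the \(\vartheta_i\) commute with horizontal forms), the descent equation becomes
\[
  \delta_\epsilon A_R(\eta)=-\dH\,\Pol_2(a^2_{d-1})(\epsilon,\eta)
\]
once the signs are fixed to match \eqref{eq:ShadowFormula}. Condition (iii) ensures the right-hand side is a genuine local \((d-1)\)-form. Stokes' theorem on \(\Sigma\) with \(\partial\Sigma=S\) closed then yields \eqref{eq:ShadowFormula}.

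The main obstacle is this sign bookkeeping. It involves three conventions that must all agree: the Hamiltonian convention, the ordering in \(\Pol_2\), and the Koszul signs of the tricomplex \eqref{eq:tricomplex}. I would calibrate them on a one-line Abelian example before trusting the general identity.

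For the final claims, I would first extend \(K\) antisymmetrically to \(\g_L\oplus\g_R\), taking the \(LL\) and \(RR\) blocks from the full lift \(H=H_L\oplus H_R\). Since \(\{H_\epsilon,H_\eta\}=\Omega_\Sigma(v_\epsilon,v_\eta)\) and \(H_{[\epsilon,\eta]}=0\) for mixed pairs, the mixed block of \(K_\Sigma\) is exactly \(K_{LR}\). By \cref{prop:classical-cocycle} the whole form is a CE cocycle, and it is the charge cocycle.

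For independence of the representative, consider a change \(a^1_d\to a^1_d+sm^0_d+\dH n^1_{d-1}\) that preserves admissibility. The second descendant shifts by \(\dH\)-exact terms, which integrate to zero over the closed \(S\), and by \(s n^1_{d-1}\)-type terms. Deghostified, the latter become \(\delta_\epsilon N(\eta)-\delta_\eta N(\epsilon)-N([\epsilon,\eta])\). The \(\dH n\) shift also changes \(H_R\) by a corner integral \(\ell(\eta)=\int_S N(\eta)\). By \cref{prop:classical-cocycle}, such a shift alters \(K\) only by \(-\dd_{\CE}\ell\), a CE coboundary.
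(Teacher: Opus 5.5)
Your proposal is correct and follows the paper's proof essentially step for step. You polarize the mixed $C_LC_R$ component of $sa^1_d+\dH a^2_{d-1}=0$; the absence of left ghosts in $a^1_d$ and $[\g_L,\g_R]=0$ remove the other contributions. You then apply Stokes on $\Sigma$ and identify $\Lie_{v_\epsilon}H_R(\eta)$ with $\Omega_\Sigma(v_\epsilon,v_\eta)$ through admissibility, and you handle representative independence with the same shift $a^2_{d-1}\mapsto a^2_{d-1}+sn^1_{d-1}+\dH r^2_{d-2}$ used in the paper's appendix. The sign of $\Pol_2(sa^1_d)(\epsilon,\eta)$ that you leave to be fixed is simply asserted in the paper as $+\Lie_{v_\epsilon}A_R(\eta)$, which agrees with the five-dimensional example, so deferring it does not leave you behind the paper's argument.
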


\begin{proof}
Polarizing the second descent equation \eqref{eq:WZ} in one left parameter and one right parameter gives
\begin{equation}
  \Lie_{v_\epsilon}A_R(\eta)
  +\dH\Pol_2(a_{d-1}^2)(\epsilon,\eta)=0.
  \label{eq:polarizedDescent}
\end{equation}
There is no second term \(\Lie_{v_\eta}A_L(\epsilon)\), because the chosen representative has no left-ghost component, and the cross bracket vanishes by \eqref{eq:splitg}.  Integrating over \(\Sigma\) yields
\begin{equation}
  \Lie_{v_\epsilon}H_R(\eta)
  =-\int_S\Pol_2(a_{d-1}^2)(\epsilon,\eta).
  \label{eq:integratedDescent}
\end{equation}
On the other hand, Hamiltonian admissibility and \eqref{eq:HamConvention} imply
\[
  \Lie_{v_\epsilon}H_R(\eta)
  =\del H_R(\eta)(v_\epsilon)
  =-\Omega_\Sigma(v_\eta,v_\epsilon)
  =\Omega_\Sigma(v_\epsilon,v_\eta).
\]
This proves \eqref{eq:ShadowFormula}.

Under \eqref{eq:BRSTequiv}, the second descendant changes by a polarized CE-exact term plus a horizontal exact term. Integration over closed \(S\) removes the horizontal exact term, while the remaining contribution is a CE coboundary, for detail, see \cref{app:independence}. Hence the induced class is independent of the chosen Bardeen representative.
\end{proof}

\begin{definition}[Classical shadow map]
On the subspace of Hamiltonian admissible transgressive anomaly classes, define
\begin{equation}
  \mathfrak S_{\Sigma,S}([a_d^1])
  :=\left[-\int_S\Pol_2(a_{d-1}^2)\right]
  \in H^2_{\CE}(\g_\partial;\R).
  \label{eq:shadowmap}
\end{equation}
\end{definition}

The classical shadow theorem says that \(\mathfrak S_{\Sigma,S}([a_d^1])\) is not an abstractly assigned cocycle: it is the non-equivariance class of the classical covariant phase space charges. This is the precise meaning of the word \emph{shadow} we defined in this paper.

\subsection{Local descent--Hamiltonian compatibility}

The preceding result is the integrated, charge algebra statement. It is complemented by a local identity that directly relates the first descendant to the presymplectic current. Let \(Q\) be the BRST evolutionary vector field, with field space Cartan relation
\begin{equation}
  s=[\ii_Q,\del]=\ii_Q\del-\del\ii_Q.
  \label{eq:CartanBRST}
\end{equation}

\begin{proposition}[Descent--Hamiltonian identity]
\label{prop:DHidentity}
Suppose
\[
  \del L=\mathcal E+\dH\theta,
  \qquad \omega=\del\theta,
  \qquad sL+\dH a_d^1=0.
\]
Define \(\mu=a_d^1+\ii_Q\theta\). Then
\begin{equation}
  \dH\bigl(\ii_Q\omega-\del\mu\bigr)=s\mathcal E.
  \label{eq:DHidentity}
\end{equation}
On shell, \(\ii_Q\omega-\del\mu\simeq\dH k\) locally.
\end{proposition}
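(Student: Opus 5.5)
The plan is a direct computation in the variational bicomplex, using only the tricomplex relations \eqref{eq:tricomplex}, the Cartan relation \eqref{eq:CartanBRST} and the two hypotheses \(\del L=\mathcal E+\dH\theta\) and \(sL+\dH a_d^1=0\). The idea is to compute \(\dH\ii_Q\omega\) and \(\dH\del\mu\) separately and take the difference, so that everything except the Euler--Lagrange term cancels.

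The first step is to apply \(\del\) to \(sL+\dH a_d^1=0\). Since \(\del\) anticommutes with both \(s\) and \(\dH\), this gives
\[
  s\,\del L+\dH\del a_d^1=0 .
\]
Inserting the first variation decomposition \eqref{eq:firstvariation} and using \(s\dH=-\dH s\), we obtain
\[
  s\mathcal E-\dH s\theta+\dH\del a_d^1=0,
  \qquad\text{that is,}\qquad
  \dH\del a_d^1=\dH s\theta-s\mathcal E .
\]

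The second step is to express \(s\theta\) through \eqref{eq:CartanBRST}: \(s\theta=\ii_Q\del\theta-\del\ii_Q\theta=\ii_Q\omega-\del\ii_Q\theta\). Hence
\[
  \dH\del\bigl(a_d^1+\ii_Q\theta\bigr)=\dH\ii_Q\omega-s\mathcal E ,
\]
which is \eqref{eq:DHidentity} after rearranging, i.e.\ \(\dH(\ii_Q\omega-\del\mu)=s\mathcal E\). In this step one must fix the sign with which \(\dH\) passes through \(\ii_Q\) (and through \(\del\)). Because \(Q\) is evolutionary, \(\ii_Q\) commutes or anticommutes with \(\dH\) according to the convention of \cref{app:signs}. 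I expect the main obstacle to be this Koszul bookkeeping: the relative sign between \(\dH\del\ii_Q\theta\) and \(\dH s\theta\) has to come out so that the two \(\ii_Q\theta\)-terms combine into \(\del\mu\) rather than cancel with the wrong sign. If it does not, \(\mu\) has to be read with the sign dictated by the convention. I would first check this in the degree-counted convention in which \(\ii_Q\) has total degree \(-1+1=0\) and hence commutes with \(\dH\).

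The final step is the on-shell statement. The right-hand side \(s\mathcal E\) is proportional to the equations of motion and their prolongations: \(s\) acts on \(\mathcal E\) through the gauge transformation and is linear in \(\mathcal E\). So \(s\mathcal E\simeq0\), and \(\ii_Q\omega-\del\mu\) is \(\dH\)-closed on shell. By the algebraic Poincaré lemma in the relevant form degree, which is below top degree with nonzero ghost number so that no cohomology obstruction arises locally, we get \(\ii_Q\omega-\del\mu\simeq\dH k\) locally.
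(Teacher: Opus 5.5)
Your derivation of \eqref{eq:DHidentity} is the paper's proof. You apply \(\del\) to \(sL+\dH a_d^1=0\), insert \(\del L=\mathcal E+\dH\theta\) to obtain \(\dH(s\theta-\del a_d^1)=s\mathcal E\), and then substitute \(s\theta=\ii_Q\omega-\del\ii_Q\theta\). The sign problem you single out as the main obstacle does not arise. The substitution happens inside \(\dH(\cdots)\) and uses only linearity of \(\dH\), so nothing is ever moved past \(\dH\). Your own displayed lines already give the identity with \(\mu=a_d^1+\ii_Q\theta\) exactly as stated, and the hedge about reinterpreting the sign of \(\mu\) can be dropped.

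The step that fails as written is the on-shell one, which the paper asserts without argument. The algebraic Poincar\'e lemma is an \emph{off-shell} statement. From \(s\mathcal E\simeq0\) you only learn that \(\ii_Q\omega-\del\mu\) is \(\dH\)-closed on shell in horizontal degree \(d=\dim Y-1\), i.e.\ that it is a conserved current. On-shell \(\dH\)-cohomology in that degree (characteristic cohomology) is in general nonzero, so ``closed on shell, below top degree'' does not imply ``exact on shell''. Your remark about nonzero ghost number points at the right mechanism, but it has to be turned into an argument.

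Since \(s\mathcal E\) has ghost number one, it is linear in the unconstrained ghost jets (\(C\), \(\del C\) and their derivatives), and every term carries a factor of the Euler--Lagrange expressions. Integrating by parts to strip the derivatives off the ghosts gives
\[
  s\mathcal E=\dH\rho+C^\alpha A_\alpha+\del C^\alpha\wedge B_\alpha,
  \qquad \rho\simeq0,
\]
with \(A_\alpha,B_\alpha\) ghost-independent. Because \(s\mathcal E=\dH(\ii_Q\omega-\del\mu)\) is \(\dH\)-exact off shell, its Euler--Lagrange derivatives with respect to \(C^\alpha\) and \(\del C^\alpha\) vanish. Here \(\del\phi\) and \(\del C\) are treated as additional jet variables. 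This forces \(A_\alpha=B_\alpha=0\); it is Noether's second theorem in this language. Hence \(\dH(\ii_Q\omega-\del\mu-\rho)=0\) holds off shell. Only now does the algebraic Poincar\'e lemma in degree \(d<d+1\) give \(\ii_Q\omega-\del\mu=\dH k+\rho\simeq\dH k\) locally. The paper's own five-dimensional example shows exactly this structure, with \(k=\kappa\,\del B_E\wedge C_M\) and \(\rho=\kappa H_E\wedge\del C_M\).
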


\begin{proof}
Apply \(\del\) to \(sL+\dH a_d^1=0\), use \eqref{eq:tricomplex} and substitute \eqref{eq:firstvariation}.  One obtains
\(
 \dH(s\theta-\del a_d^1)=s\mathcal E
\).
Using \eqref{eq:CartanBRST},
\(
 s\theta=\ii_Q\omega-\del\ii_Q\theta
\), which gives \eqref{eq:DHidentity}.
\end{proof}

The first descendant therefore supplies a relative Hamiltonian density, while the second descendant supplies the charge algebra cocycle. The two statements are successive layers of the same descent.

\section{Mixed anomaly: the complete calculation}
\label{sec:example}
We now illustrate the shadow theorem in a five dimensional inflow model. The calculation exhibits the classical charge cocycle and the BRST second descendant as two realizations of the same Weil transgression class.
\subsection{Higher Weil data and the inflow transgression}

Let \(Y\) be five dimensional and let
\[
  B_E,B_M\in\Omega^2(Y),
  \qquad H_E=\dH B_E,
  \qquad H_M=\dH B_M.
\]
The Abelian higher Weil algebra \cite{SatiSchreiberStasheff2008} has degree-two generators \(b_E,b_M\) and degree-three curvatures \(h_E,h_M\), with
\begin{equation}
  \dd_W b_I=h_I,
  \qquad \dd_W h_I=0.
  \label{eq:higherWeil}
\end{equation}
The mixed invariant polynomial and a transgression element are
\begin{equation}
  I_6=\kappa h_Eh_M,
  \qquad
  \operatorname{cs}_5=-\kappa h_Eb_M,
  \qquad
  \dd_W\operatorname{cs}_5=I_6.
  \label{eq:universalMixed}
\end{equation}
The classical realization is
\begin{equation}
  L_5=-\kappa H_E\wedge B_M,
  \qquad
  \dH L_5=\kappa H_E\wedge H_M.
  \label{eq:L5}
\end{equation}
For the Maxwell normalization one may take \(\kappa=(2\pi)^{-2}\).

\subsection{Classical covariant phase space}

Using \(\del H_E=-\dH\del B_E\) and total Koszul grading, the variation of \eqref{eq:L5} is
\begin{align}
  \del L_5
  &=\kappa\bigl(\del B_E\wedge H_M+H_E\wedge\del B_M\bigr)
    +\dH\theta_4,\label{eq:variationL5}\\
  \theta_4&=\kappa\,\del B_E\wedge B_M,\label{eq:theta4}\\
  \omega_4&=\del\theta_4
  =-\kappa\,\del B_E\wedge\del B_M.
  \label{eq:omega4}
\end{align}
Let \(\Sigma\subset Y\) be a four dimensional hypersurface and \(S=\partial\Sigma\). The classical presymplectic form is
\begin{equation}
  \Omega_\Sigma
  =-\kappa\int_\Sigma\del B_E\wedge\del B_M.
  \label{eq:OmegaExample}
\end{equation}

Take ordinary classical one-form gauge parameters \(\Lambda_E,\Lambda_M\) and vector fields
\begin{equation}
  v_E(\Lambda_E)B_E=\dH\Lambda_E,
  \quad v_E(\Lambda_E)B_M=0,
  \qquad
  v_M(\Lambda_M)B_M=\dH\Lambda_M,
  \quad v_M(\Lambda_M)B_E=0.
  \label{eq:gaugevectors}
\end{equation}
With the convention \eqref{eq:HamConvention}, Hamiltonians are
\begin{align}
  H_E[\Lambda_E]
  &=\kappa\int_\Sigma\dH\Lambda_E\wedge B_M,
  \label{eq:HEcharge}\\
  H_M[\Lambda_M]
  &=-\kappa\int_\Sigma B_E\wedge\dH\Lambda_M.
  \label{eq:HMcharge}
\end{align}
Indeed,
\(
 \ii_{v_E}\Omega_\Sigma=-\del H_E
\)
and
\(
 \ii_{v_M}\Omega_\Sigma=-\del H_M
\).
On the topological equations of motion \(H_E\simeq H_M\simeq0\), these become corner charges:
\begin{equation}
  H_E\simeq\kappa\int_S\Lambda_E\wedge B_M,
  \qquad
  H_M\simeq-\kappa\int_S B_E\wedge\Lambda_M,
  \label{eq:cornercharges}
\end{equation}
up to the orientation conventions of Stokes' theorem.

\subsection{The classical mixed charge anomaly}

The two gauge algebras commute, but their Hamiltonians do not:
\begin{align}
  K_{EM}(\Lambda_E,\Lambda_M)
  &:=\{H_E[\Lambda_E],H_M[\Lambda_M]\}
  \nonumber\\
  &=\Omega_\Sigma(v_E(\Lambda_E),v_M(\Lambda_M))
  \nonumber\\
  &=-\kappa\int_\Sigma\dH\Lambda_E\wedge\dH\Lambda_M
  \nonumber\\
  &=-\kappa\int_S\Lambda_E\wedge\dH\Lambda_M.
  \label{eq:KEM}
\end{align}
For general pairs \(\lambda=(\Lambda_E,\Lambda_M)\) and \(\lambda'=(\Lambda'_E,\Lambda'_M)\), the antisymmetric extension is
\begin{equation}
  K_{\mathrm{CPS}}(\lambda,\lambda')
  =-\kappa\int_\Sigma
  \bigl(
  \dH\Lambda_E\wedge\dH\Lambda'_M
  -\dH\Lambda'_E\wedge\dH\Lambda_M
  \bigr).
  \label{eq:fullK}
\end{equation}
It is field independent, reducibility invariant under \(\Lambda_I\mapsto\Lambda_I+\dH\lambda_I\), and a CE two-cocycle. For the Abelian boundary algebra, a nonzero \eqref{eq:fullK} is a nontrivial central extension class. This is the classical phase space anomaly.

With related electromagnetic boundary central extensions \cite{FreidelPranzetti2018,HofmanIqbal2018}, we will identify it with a higher BRST second descendant of the mixed one-form anomaly below.

\subsection{Original and improved higher BRST descents}

Introduce two reducible BRST towers
\begin{equation}
  sB_I=\dH C_I,
  \qquad
  sC_I=\dH c_I,
  \qquad
  sc_I=0,
  \qquad I=E,M.
  \label{eq:BRSTtowers}
\end{equation}
The higher Russian formula \cite{JiaWangZhang2026I} is
\begin{equation}
  (\dH+s)(B_I-C_I+c_I)=H_I.
  \label{eq:Russian}
\end{equation}
The direct realization of \(\operatorname{cs}_5\) gives the familiar descent
\begin{equation}
  L_5=-\kappa H_EB_M,
  \qquad
  a_4^1=\kappa H_EC_M,
  \qquad
  a_3^2=-\kappa H_Ec_M.
  \label{eq:originaldescent}
\end{equation}
This is a valid representative, but \(a_3^2\) contains the ghost for ghost and is not adapted to an ordinary two-charge cocycle. We now perform an explicit cohomologically equivalent improvement.

Set
\begin{equation}
  m_3^1=\kappa B_E\wedge C_M,
  \qquad
  n_2^2=\kappa\bigl(C_E\wedge C_M+B_E\wedge c_M\bigr).
  \label{eq:mnimprovement}
\end{equation}
A direct computation gives
\begin{equation}
  a_3^2
  =\kappa C_E\wedge\dH C_M
   +s m_3^1-\dH n_2^2.
  \label{eq:a3equivalence}
\end{equation}
Therefore the equivalent descent representative may be chosen as
\begin{align}
  \widetilde a_4^1
  &:=a_4^1-\dH m_3^1
  =-\kappa B_E\wedge\dH C_M,
  \label{eq:atilde4}\\
  \widetilde a_3^2
  &:=a_3^2-s m_3^1+\dH n_2^2
  =\kappa C_E\wedge\dH C_M,
  \label{eq:atilde3}\\
  \widetilde a_2^3
  &:=-\kappa c_E\wedge\dH C_M.
  \label{eq:atilde2}
\end{align}
They obey
\begin{equation}
  sL_5+\dH\widetilde a_4^1=0,
  \qquad
  s\widetilde a_4^1+\dH\widetilde a_3^2=0,
  \qquad
  s\widetilde a_3^2+\dH\widetilde a_2^3=0.
  \label{eq:improveddescent}
\end{equation}
The first descendant is a Bardeen representative that assigns the anomaly to the magnetic factor, while the second descendant is manifestly bilinear in the electric and magnetic degree-one ghosts \cite{Bardeen1984}. For detailed calculation, see \cref{app:improved}

\subsection{Exact equality between the shadow and the second descendant}

Deghostification of the improved first descendant gives
\begin{equation}
  \degh\bigl(\widetilde a_4^1\bigr)(\Lambda_M)
  =-\kappa B_E\wedge\dH\Lambda_M,
  \label{eq:deghfirst}
\end{equation}
whose integral is precisely the magnetic Hamiltonian \eqref{eq:HMcharge}.  Thus the descent is Hamiltonian admissible.

The polarized second descendant is
\begin{equation}
  \Pol_2\bigl(\widetilde a_3^2\bigr)(\Lambda_E,\Lambda_M)
  =\kappa\Lambda_E\wedge\dH\Lambda_M.
  \label{eq:deghsecond}
\end{equation}
Applying \Cref{thm:shadow} yields
\begin{equation}
  \boxed{
  K_{EM}(\Lambda_E,\Lambda_M)
  =-\int_S
  \Pol_2\bigl(\widetilde a_3^2\bigr)(\Lambda_E,\Lambda_M)
  =-\kappa\int_S\Lambda_E\wedge\dH\Lambda_M.}
  \label{eq:MainEqualityExample}
\end{equation}
Equation (4.25) gives the desired identification. The classical charge anomaly and the ghost-number-two BRST descendant are not merely analogous expressions; they arise from the same Weil transgression class under two distinct realizations:
\begin{equation}
\begin{CD}
  \kappa h_Eh_M @>{\text{higher BRST realization}}>>
  \widetilde a_3^2=\kappa C_E\dH C_M \\
  @V{\text{classical CPS realization}}VV
  @VV{\text{deghostify and integrate on }S}V \\
  K_{EM} @= -\displaystyle\int_S\degh(\widetilde a_3^2).
\end{CD}
\label{eq:commutingdiagram}
\end{equation}

\subsection{Relation to the relative Hamiltonian corner term}

For the original first descendant \(a_4^1=\kappa H_EC_M\), \Cref{prop:DHidentity} gives
\begin{equation}
  \ii_{Q_M}\omega_4-\del a_4^1
  =
  \dH\!\left(\kappa\,\del B_E\wedge C_M\right)
  +\kappa H_E\wedge\del C_M
  \simeq
  \dH k_3^{1,1},
  \qquad
  k_3^{1,1}
  =\kappa\,\del B_E\wedge C_M.
\end{equation}
Deghostifying \(C_M\mapsto\Lambda_M\) and contracting once more with the electric classical vector field gives
\begin{equation}
  \ii_{v_E}k_{\Lambda_M}
  =\kappa\,\dH\Lambda_E\wedge\Lambda_M.
  \label{eq:secondcontraction}
\end{equation}
On closed \(S\), integration by parts identifies this with \(\kappa\Lambda_E\wedge\dH\Lambda_M\).  Consequently,
\begin{equation}
  K_{EM}
  =-\int_S\ii_{v_E}k_{\Lambda_M}.
  \label{eq:cornerToK}
\end{equation}
The local relative Hamiltonian defect is therefore the first phase space descendant, while the classical charge cocycle is obtained by a second contraction. This matches the passage from the ghost-number-one to the ghost-number-two BRST descendant.

\section{Conclusion and Discussion}
\label{sec:discussion}

We have introduced a Weil covariant phase space bicomplex in which a classical anomaly is the obstruction to a basic equivariant extension of the covariant symplectic form. The obstruction is the non-equivariance cocycle of the Hamiltonian charge algebra. We then proved a shadow theorem: for a Hamiltonian admissible mixed BRST descent, the classical charge cocycle is the Cauchy-transgressed, deghostified ghost-number-two descendant. The construction is invariant under the usual changes of descent representative up to CE coboundary.

The construction involves three conceptually distinct objects: the classical phase space cocycle \([K_\Sigma]\), defined solely from \((\cP_\Sigma,\Omega_\Sigma)\) and the classical symmetry action. The local BRST class \([a_d^1]\), a classical cohomological encoding of possible consistent anomalies. The quantum anomaly, realized when a quantum effective action or partition function transforms by the nonzero class \([a_d^1]\). The shadow theorem identifies  the classical phase space cocycle with a descendant of the local BRST class under explicit Hamiltonian admissibility assumptions.

For the mixed electric--magnetic one-form anomaly, the classical covariant phase space of the five dimensional transgression carries
\[
  K_{EM}(\Lambda_E,\Lambda_M)
  =-\kappa\int_S\Lambda_E\wedge\dH\Lambda_M,
\]
while the improved BRST descent has
\[
  \widetilde a_3^2=\kappa C_E\wedge\dH C_M.
\]
Their equality under deghostification and corner integration proves, in this model, that the classical phase space anomaly is a genuine shadow of the BRST anomaly class.

The result should be viewed neither as a redefinition of a quantum anomaly nor as a universal quantization theorem. Its content is a precise bridge between two cohomological realizations of the same transgression data: failure of basic Hamiltonian equivariance in the classical Weil--CPS complex, and the second descent of a consistent BRST anomaly.

\subsection*{The Weil bicomplex and the inflow example}

The same universal invariant polynomial is mapped into two different complexes.  The classical realization lands in the Weil--CPS bicomplex and tests whether a symplectic form admits a basic equivariant extension. The BRST realization maps Weil generators to fields, ghosts and ghosts for ghosts and produces the Stora--Zumino  descent \cite{Zumino1984}. The shadow theorem compares the degree-two symmetry components after Cauchy transgression.  Schematically,
\begin{equation}
\begin{CD}
  W(\mathfrak L) @>{\rho_{\mathrm{BRST}}}>>
  \bigl(\Omega^{\bullet,\bullet}_{\mathrm{loc}},\dH+s\bigr) \\
  @V{\rho_{\mathrm{CPS}}}VV @VV{\Pol_2\text{ and }\int_S}V \\
  \bigl(W(\g_\partial)\otimes\Omega(\cP_\Sigma)\bigr)_{\mathrm{basic}?
  } @>>{\text{basicness defect}}> H^2_{\CE}(\g_\partial;\R).
\end{CD}
\label{eq:masterdiagram}
\end{equation}
The question mark emphasizes that an anomaly is precisely the failure of a basic lift. This is the sense in which the classical and BRST structures are two descendants of the same Weil data.\\

Our work isolates the non-equivariance class as the basicness obstruction in a Weil--CPS bicomplex and identifies it, at the level of representatives and cohomology, with the ghost-number-two descendant of a higher-form anomaly polynomial. This comparison is also analogous to the familiar relation between consistent anomalies and Schwinger terms: the ghost-number-one object governs the anomalous Ward identity, whereas the ghost-number-two descendant governs the current or charge algebra \cite{Faddeev1984}. Our result is classical because the latter algebra is a Poisson algebra on \(\cP_\Sigma\), not an operator algebra on a Hilbert space.\\

Our example uses the classical covariant phase space of a five dimensional inflow theory. The four dimensional quantum anomaly is encoded by a five dimensional transgression, while the classical charge cocycle lives on the corner of a four dimensional hypersurface in the bulk. Thus the result is a phase space refinement of anomaly inflow \cite{Callan1985}:
\[
  \text{bulk transgression}
  \longrightarrow
  \begin{cases}
    \text{boundary BRST anomaly descent},\\
    \text{bulk classical CPS charge extension}.
  \end{cases}
\]

\subsection*{The structure analogy and assumptions}

The guiding analogy is structural. Gromov non-squeezing is a purely classical theorem, yet symplectic capacities provide a geometric skeleton for uncertainty relations once a quantum scale is introduced \cite{Gromov1985,deGosson2002,deGossonLuef2006}. Here the classical non-equivariance class provides a geometric skeleton for a BRST anomaly descent. In both cases the classical geometry does not reproduce the quantum phenomenon in full; it constrains and organizes the structures that quantization can realize.\\

The theorem applies to Hamiltonian admissible transgressive anomaly classes. It is natural to ask whether more general quantum anomaly classes admit analogous classical phase space shadows. The present analysis is local and perturbative, so for global gerbe data, differential cohomology and torsion anomalies may require a Deligne or bordism refinement. In the end, if the corner itself has a boundary, further descendants and higher-codimension data may enter \cite{JiaWangZhang2026II}.

\section*{\centering Acknowledgment}
This work was supported by the National Natural Science Foundation of China with the Grant No. 12375049 and No. 12405059, Key Program of the Natural Science Foundation of Jiangxi
Province under Grant No. 20232ACB201008, and the
Ganpo High-Level Innovative Talent Program.

\section*{Appendix}

\appendix

\section{Sign and grading conventions}
\label{app:signs}

We use the tridegree \((p,q,g)\) of horizontal degree, field space degree and ghost number, with total parity
\[
  |\alpha|=p+q+g\pmod 2.
\]
The differentials \(\dH\), \(\del\) and \(s\) are odd and pairwise anticommute. The BRST contraction \(\ii_Q\) has tridegree \((0,-1,+1)\) and total degree zero, so it is an even derivation of the total graded product. The field space Cartan relation is \(s=[\ii_Q,\del]\).

On integrated phase space we use \(\ii_{X_F}\Omega=-\del F\) and \(\{F,G\}=\Omega(X_F,X_G)\). Reversing this convention reverses several displayed overall signs but does not change the cohomology classes or the shadow statement.

\section{Weil, Cartan and BRST models}
\label{app:weil}

The Weil algebra \(W(\g)\) is acyclic, but its basic subcomplex in \(W(\g)\otimes\Omega(M)\) computes equivariant cohomology. The Mathai--Quillen/Kalkman isomorphism identifies the basic Weil model with the Cartan model \cite{Mathai1986}
\[
  \bigl(S(\g^*[2])\otimes\Omega(M)\bigr)^\g,
  \qquad
  \dd_C=\dd-u^a\ii_{v_a}.
\]
For a Hamiltonian action, \(\Omega-H\) is Cartan closed. It belongs to the invariant Cartan complex if and only if the moment map is equivariant. The defect is the cocycle \(K\).

BRST ghosts are a Chevalley--Eilenberg realization of the degree-one Weil generators. In a reducible two-form theory, the higher Weil generator of degree two is realized by the total field
\[
  \widehat B=B-C+c,
  \qquad
  (\dH+s)\widehat B=H.
\]
The curvature generator maps to \(H\). Expanding a universal transgression in ghost number gives the full higher descent.

\section{Representative independence of the shadow class}
\label{app:independence}

Suppose
\[
  a_d^1\mapsto a_d^1+s m_d^0+\dH n_{d-1}^1.
\]
One can choose the next descendant so that
\[
  a_{d-1}^2\mapsto a_{d-1}^2+s n_{d-1}^1+\dH r_{d-2}^2.
\]
After polarization and integration over closed \(S\), the \(\dH r\) term vanishes and the \(s n\) term becomes a CE coboundary. Therefore
\[
  \left[-\int_S\Pol_2(a_{d-1}^2)\right]
\]
is a well defined CE class. This is the cohomological version of the familiar statement that Bardeen counterterms move an anomaly between symmetry factors without removing the mixed class.

\section{Detailed verification of the improved descent}
\label{app:improved}

Using \eqref{eq:BRSTtowers},
\begin{align*}
  s m_3^1
  &=\kappa\bigl(\dH C_E\wedge C_M+B_E\wedge\dH c_M\bigr),\\
  \dH n_2^2
  &=\kappa\bigl(
    \dH C_E\wedge C_M
    +C_E\wedge\dH C_M
    +H_E\wedge c_M
    +B_E\wedge\dH c_M
  \bigr).
\end{align*}
Hence
\[
  s m_3^1-\dH n_2^2
  =-\kappa C_E\wedge\dH C_M-\kappa H_E\wedge c_M,
\]
which proves \eqref{eq:a3equivalence}.  Furthermore,
\begin{align*}
  s\widetilde a_4^1
  &=-\kappa\dH C_E\wedge\dH C_M,
  &\dH\widetilde a_3^2
  &=+\kappa\dH C_E\wedge\dH C_M,\\
  s\widetilde a_3^2
  &=+\kappa\dH c_E\wedge\dH C_M,
  &\dH\widetilde a_2^3
  &=-\kappa\dH c_E\wedge\dH C_M.
\end{align*}
This verifies the full improved descent.

\section{A concise dictionary}
\label{app:dictionary}

\begin{longtable}{>{\raggedright\arraybackslash}p{0.25\textwidth} >{\raggedright\arraybackslash}p{0.31\textwidth} >{\raggedright\arraybackslash}p{0.31\textwidth}}
\toprule
Structure & Classical CPS realization & BRST/quantum realization \\
\midrule
\endfirsthead
\toprule
Structure & Classical CPS realization & BRST/quantum realization \\
\midrule
\endhead
Invariant polynomial & Universal source of the symplectic transgression & Anomaly polynomial \(I_{d+2}\) \\
First descendant & Hamiltonian density or relative Hamiltonian & Consistent anomaly representative \(a_d^1\) \\
Second descendant & Charge non-equivariance cocycle \(K\) & Schwinger term type descendant \(a_{d-1}^2\) \\
Weil condition & Basic equivariant extension of \(\Omega\) & BRST/CE closure and Wess--Zumino consistency \\
Nontriviality & Central or higher extension of classical charges & Obstruction to quantum gauging when the class is realized \\
\bottomrule
\end{longtable}

\end{document}